\documentclass[11pt]{article}

\usepackage[margin=1in]{geometry}
\usepackage{amsmath,amssymb,amsthm}
\usepackage{booktabs}
\usepackage{enumitem}
\usepackage{float}
\usepackage{graphicx}
\usepackage{hyperref}
\usepackage{microtype}
\usepackage[section]{placeins}
\usepackage[numbers,sort&compress]{natbib}

\hypersetup{
  colorlinks=true,
  linkcolor=blue,
  citecolor=blue,
  urlcolor=blue
}

\newtheorem{theorem}{Theorem}
\newtheorem{lemma}{Lemma}
\newtheorem{proposition}{Proposition}
\newtheorem{definition}{Definition}
\DeclareMathOperator{\TV}{TV}

\title{Auditing Combinatorial Randomness from Finite Transcripts}
\author{
Faruk Alpay\textsuperscript{*} \quad Levent Sar{\i}oglu\\
Department of Computer Engineering, Bahcesehir University, Istanbul, Turkey\\
\texttt{faruk.alpay@bahcesehir.edu.tr}, \texttt{levent.sarioglu@bahcesehir.edu.tr}\\
\textsuperscript{*}Correspondence: \texttt{alpay@lightcap.ai}
}
\date{}

\begin{document}
\maketitle

\begin{abstract}
Public randomness services expose finite transcripts: lottery histories,
randomness-beacon pulses, and generator audit streams.  A transcript auditor
must control false alarms under the public null while retaining power against
tampering or implementation faults.  For combinatorial draws without
replacement, the transcript space has support size $\binom{m}{k}$, and arbitrary
joint-uniformity testing inherits the sparse-uniformity barrier
$\Theta(\sqrt{\binom{m}{k}}/\varepsilon^2)$.  We formulate this barrier as a
black-box auditability frontier and construct generator-agnostic geometric
audits for structured alternatives.  The battery maps draws to the hypersimplex
and evaluates exact-combinatorial null calibrated statistics:
marginal chi-square, pair maxima, serial overlap, anchored-box discrepancy, and
low-dimensional H0/MST geometry.  On 1,956 EuroMillions main-number draws from
2004-02-13 through 2026-06-19, no statistic survives false-discovery adjustment
among observed/reference sources (minimum observed/reference BH $q=0.279$).
In contrast, GPU Monte Carlo experiments with up to 300,000 exact-null
replications and 60,000 alternative replications per condition show that marginal-preserving
joint alternatives are invisible to one-dimensional marginal tests while being
detected by geometric/pair statistics.  For a 5/50 draw at $n=1,956$, a
block-cluster alternative of strength 0.04 is detected by pair maxima with
power 0.638 while marginal chi-square remains at 0.051; a band-repulsion
alternative of strength 0.08 is detected by anchored boxes with power 0.741
while marginal chi-square remains at 0.051.  The results separate marginal
goodness-of-fit from joint-structure audit power and quantify the sample sizes
at which specified low-dimensional failure modes become detectable.
\end{abstract}

\section{Introduction}

Public randomness is a security primitive when multiple parties rely on a source
whose internal entropy is not directly observable.  Lotteries, public randomness
beacons, and deployed pseudorandom generators all leave audit transcripts; the
security-relevant object is the finite transcript and the class of deviations it
can rule out.  For a draw of $k$ labels without replacement from $m$, the exact
null distribution is uniform on
\[
  \Omega_{m,k}=\{x\in\{0,1\}^m:\|x\|_1=k\},
  \qquad |\Omega_{m,k}|=\binom{m}{k}.
\]
The main EuroMillions component has
$\binom{50}{5}=2{,}118{,}760$ possible outcomes, while a 6/90 format has
$622{,}614{,}630$.  Single-coordinate count tests see only a projection of this
space.  A distribution can preserve every single-number marginal and still alter
pair probabilities, serial overlap, or the geometry of sorted draws.

The security limitation is information-theoretic.  Any black-box transcript
audit for arbitrary alternatives on $\Omega_{m,k}$ inherits Paninski's
uniformity-testing scale, so lottery-length transcripts cannot certify full
joint uniformity without restricting the adversary or fault model.  The
constructive route is to audit named structured families: hot labels, pair
couplings, serial stickiness, clustered blocks, and band repulsion.  These
families model low-dimensional tampering or implementation faults and can be
tested by exact-null calibrated geometric statistics.

The contributions are:
\begin{enumerate}[leftmargin=1.4em]
  \item A transcript-audit model for public combinatorial randomness, with audit
  advantage defined against a named alternative class.
  \item A lower-bound corollary showing that arbitrary joint-uniformity audits
  require $\Omega(\sqrt{\binom{m}{k}}/\varepsilon^2)$ samples.
  \item Generator-agnostic statistics on the draw hypersimplex:
  marginal counts, pair maxima, serial overlap, anchored-box discrepancy, and
  H0/MST geometry.  The anchored boxes are calibrated against the exact
  combinatorial model.
  \item CPU and CUDA implementations, with GPU experiments on an NVIDIA RTX 5090
  and an NVIDIA A100 SXM4.  The GPU experiments add
  marginal-preserving alternatives that directly expose the blind spot of
  one-dimensional audits.
\end{enumerate}

\section{Related Work}

Classical RNG testing has a mature empirical tradition, including TestU01
\citep{lecuyer2007testu01}.  Lattice, spectral, and discrepancy methods connect
random-number generation to geometry and quasi-Monte Carlo theory
\citep{niederreiter1992}.  These tests are especially powerful for structured
linear generators.  The present setting is black-box: the input is a finite
sequence of combinatorial draws, with no access to a generator recurrence.
For cryptographic random-bit generation, NIST SP 800-22 provides statistical
tests for output streams and SP 800-90B specifies validation requirements for
entropy sources \citep{nist80022,nist80090b}.  Statistical transcript auditing
addresses observable distributional faults; it is distinct from computational
unpredictability and from implementation-level entropy claims.

Distribution testing gives the relevant lower-bound scale.  Paninski's
coincidence test and matching lower bound show that uniformity over a support of
size $N$ requires $\Theta(\sqrt{N}/\varepsilon^2)$ samples for constant success
probability in total variation distance \citep{paninski2008}.  The implication
for draws without replacement is immediate once $N=\binom{m}{k}$.  Canonne's
survey gives the broader property-testing context for these sample-complexity
barriers \citep{canonne2020}.  The applied statistical treatment of coincidences
has a parallel tradition in audit problems, including the birthday-style methods
of \citet{diaconis1989coincidences}.

Computational geometry enters through discrepancy and topological summaries.
Exact star discrepancy is computationally hard in high dimension
\citep{gnewuch2009}, motivating finite, calibrated families of anchored boxes
rather than exact high-dimensional optimization.  Topological data analysis
provides statistically meaningful summaries such as persistence landscapes
\citep{bubenik2015}, with standard algorithmic foundations in computational
topology \citep{edelsbrunner2010,ghrist2008}.  Random geometric complexes give
null theory for geometric point clouds \citep{bobrowski2018}.  GPU persistent
homology systems such as Ripser++ demonstrate that these computations can be
scaled \citep{zhang2020ripserpp}.  The H0 statistic used below is exactly the
Euclidean minimum-spanning-tree edge spectrum in the projected draw cloud.

Public randomness beacons are becoming more auditable.  The NIST beacon format
specifies signed, timestamped pulses and interoperability goals
\citep{kelsey2019nistir8213}.  CURBy is a public beacon based on traceable
quantum randomness \citep{kavuri2025curby,curbyweb}; our data script cached
16,000 bytes from 250 completed CURBy rounds during the 2026-06-20 run.
EuroMillions draws were obtained from the public EuroMillions API repository
\citep{euromillionsapi}; the local cache contains 1,956 main-number draws.

\section{Security Model and Auditability Frontier}

\begin{definition}[Transcript auditor]
Let $U$ be the uniform law on $\Omega_{m,k}$.  A randomized transcript auditor is
a map $A:\Omega_{m,k}^n\to\{0,1\}$, where $A=1$ denotes rejection.  It has level
$\alpha$ if $\Pr_{U^n}(A=1)\le\alpha$.  Its audit advantage against a source law
$P$ is
\[
  \operatorname{Adv}_A(P)
  =\Pr_{P^n}(A=1)-\Pr_{U^n}(A=1).
\]
\end{definition}

\begin{definition}[Arbitrary-alternative radius]
Let $N=\binom{m}{k}$.  The sparse-uniformity scale for arbitrary alternatives is
\[
  \varepsilon_{\mathrm{arb}}(m,k,n)
  = \left(\frac{\sqrt{N}}{n}\right)^{1/2}.
\]
It is the total-variation separation obtained by solving
$n\asymp \sqrt{N}/\varepsilon^2$ for $\varepsilon$.
\end{definition}

\begin{theorem}[Black-box audit lower bound]
Fix constants $\alpha,\beta\in(0,1/2)$ and let
\[
  \mathcal{P}_\varepsilon=\{P:\TV(P,U)\ge\varepsilon\}.
\]
If a level-$\alpha$ auditor satisfies
$\inf_{P\in\mathcal{P}_\varepsilon}\Pr_{P^n}(A=1)\ge1-\beta$, then
\[
  n=\Omega_{\alpha,\beta}\!\left(
    \frac{\sqrt{\binom{m}{k}}}{\varepsilon^2}
  \right).
\]
This order is achievable up to constants by sparse-uniformity testing.
\end{theorem}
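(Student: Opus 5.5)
The plan is to reduce the statement to Paninski's uniformity-testing lower bound \citep{paninski2008}. The key observation is that nothing in the definition of a transcript auditor uses the combinatorial structure of $\Omega_{m,k}$. Fix any bijection $\phi:\Omega_{m,k}\to[N]$ with $N=\binom{m}{k}$. Then $\phi$ carries $U$ to the uniform law on $[N]$. It also preserves total variation, so $\TV(P,U)=\TV(\phi_\#P,\mathrm{Unif}[N])$. Consequently, any level-$\alpha$ auditor $A$ with power at least $1-\beta$ on $\mathcal{P}_\varepsilon$ yields a uniformity tester $A\circ(\phi^{-1})^{\otimes n}$ on $[N]$ with the same errors. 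Both the lower and the upper bound therefore transfer verbatim, and it remains to recall why Paninski's bounds hold in the form needed, i.e.\ with errors $\alpha,\beta<1/2$ rather than the usual $1/3$.

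For the lower bound, I would use Paninski's mixture construction. Pair up the elements of $[N]$ (dropping one element if $N$ is odd, which only affects constants). For a uniformly random sign vector $\sigma\in\{\pm1\}^{N/2}$, let $P_\sigma$ place mass $(1+c\varepsilon\sigma_i)/N$ and $(1-c\varepsilon\sigma_i)/N$ on the two members of pair $i$. Each $P_\sigma$ has $\TV(P_\sigma,U)=c\varepsilon/2$, so taking $c=2$ puts every $P_\sigma$ in $\mathcal{P}_\varepsilon$ (this needs $\varepsilon\le 1/2$, which we may assume). Since $\inf_{P}\Pr_{P^n}(A=1)\ge 1-\beta$, averaging over $\sigma$ gives $\Pr_{\bar P_n}(A=1)\ge1-\beta$ for the mixture $\bar P_n=\mathbb{E}_\sigma P_\sigma^n$. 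Combined with the level constraint, this yields $\TV(\bar P_n,U^n)\ge 1-\alpha-\beta>0$.

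The main step, and the main obstacle, is bounding $\TV(\bar P_n,U^n)$ from above. I would do this through the $\chi^2$ divergence, using Poissonization to decouple the pairs: draw $\mathrm{Poi}(n)$ samples, and note that this changes $n$ only by constant factors with high probability. Under Poissonization the counts on distinct pairs become independent. For two independent sign vectors $\sigma,\sigma'$, the standard computation gives
\[
1+\chi^2(\bar P,U)=\mathbb{E}_{\sigma,\sigma'}\prod_i \exp\!\Big(\tfrac{2 n c^2\varepsilon^2}{N}\sigma_i\sigma_i'\Big)\le \exp\!\Big(\tfrac{2n^2c^4\varepsilon^4}{N}\Big).
\]
The last inequality follows from $\cosh x\le e^{x^2/2}$ applied over the $N/2$ pairs. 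Since $\TV\le\sqrt{\chi^2}/2$, the constraint $\TV\ge 1-\alpha-\beta$ forces $n^2\varepsilon^4/N\ge c'(\alpha,\beta)>0$, which is exactly $n=\Omega_{\alpha,\beta}(\sqrt N/\varepsilon^2)$. The care needed here is in the de-Poissonization step and in keeping the constants dependent only on $\alpha,\beta$; if the direct argument becomes messy, I would instead cite Paninski's theorem directly and amplify from error $1/3$ to $(\alpha,\beta)$ by majority vote over $O_{\alpha,\beta}(1)$ independent blocks.

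For achievability, I would take the collision (coincidence) statistic $C=\#\{i<j: X_i=X_j\}$ on $[N]$, pulled back through $\phi$. We have $\mathbb{E}C=\binom n2\|P\|_2^2$, and $\|P\|_2^2\ge (1+4\varepsilon^2)/N$ whenever $\TV(P,U)\ge\varepsilon$. Chebyshev bounds on $\mathrm{Var}\,C$ then show that thresholding $C$ at $\binom n2(1+2\varepsilon^2)/N$ succeeds once $n\ge K\sqrt N/\varepsilon^2$. Repeating the test on independent blocks and taking a majority vote drives both error probabilities below the prescribed $\alpha,\beta$ at a constant-factor cost in $n$. This establishes that the order is achievable up to constants.
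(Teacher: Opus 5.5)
Your proposal is correct and follows the paper's route: map $\Omega_{m,k}$ to $[N]$, $N=\binom{m}{k}$, by a bijection that preserves $U$ and total variation, then invoke Paninski's minimax uniformity bounds. The paper simply cites those bounds, whereas you also sketch the paired-sign mixture lower bound, a collision-based upper bound, and the amplification to arbitrary $\alpha,\beta<1/2$.

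Two small remarks. First, Poissonization is not needed, since directly $1+\chi^2(\bar P_n,U^n)=\mathbb{E}_{\sigma,\sigma'}\bigl(1+\tfrac{2c^2\varepsilon^2}{N}\sum_i\sigma_i\sigma_i'\bigr)^n\le\mathbb{E}_{\sigma,\sigma'}\exp\bigl(\tfrac{2nc^2\varepsilon^2}{N}\sum_i\sigma_i\sigma_i'\bigr)$, which is exactly the quantity you bound. Second, the restriction to small $\varepsilon$ is a genuine hypothesis, not a harmless normalization. For $\varepsilon=1-1/N$ the class $\mathcal{P}_\varepsilon$ contains only point masses, and the two-sample test that rejects iff $X_1=X_2$ has level $1/N$ and power $1$, so the theorem (like Paninski's) must be read for $\varepsilon$ bounded away from $1$.
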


\begin{proof}
Set $N=\binom{m}{k}$ and identify each draw with one of the $N$ atoms in
$\Omega_{m,k}$.  The statement is
Paninski's minimax uniformity-testing lower and upper scale applied to this
induced discrete distribution \citep{paninski2008}.
\end{proof}

For 5/50, $\sqrt{N}=1{,}456$ and $n=1{,}956$ gives the constant-one separation
scale
$\varepsilon \approx \sqrt{\sqrt{N}/n}=0.863$.  For 6/49 with $n=2{,}300$, the
same calculation is $1.275$; for 6/90 with $n=1{,}000$, it is $4.995$, outside
the range of total variation.  No level-controlled black-box auditor can offer
uniform power over all separated source laws at these transcript lengths.

\begin{proposition}[Marginal non-identifiability]
For $2\le k\le m-2$, there are non-uniform distributions on $\Omega_{m,k}$ with
exactly uniform single-coordinate marginals.
\end{proposition}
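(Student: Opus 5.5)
We construct an explicit signed perturbation of the uniform law $U$ on $\Omega_{m,k}$ whose single-coordinate marginals vanish identically. Adding a small multiple of it to $U$ then gives the required distribution. Concretely, we look for a function $h:\Omega_{m,k}\to\mathbb{R}$ with $h\not\equiv 0$ satisfying
\[
  \sum_{x\in\Omega_{m,k}} h(x)=0,
  \qquad
  \sum_{x\in\Omega_{m,k}:\,x_i=1} h(x)=0\quad(i=1,\dots,m).
\]
Then $P_\delta=U+\delta h$ is a probability distribution for $0<|\delta|\le 1/(N\|h\|_\infty)$, where $N=\binom{m}{k}$. It is non-uniform, and $\Pr_{P_\delta}(x_i=1)=k/m$ for every $i$.

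For $h$ we use a ``swap'' (alternating) construction on four labels. Since $2\le k\le m-2$, we can choose distinct labels $a,b,c,d$ and a set $S$ of $k-2$ further labels disjoint from $\{a,b,c,d\}$. This needs $m\ge k+2$, which holds. Define four draws:
\[
  x^{1}=S\cup\{a,b\},\quad x^{2}=S\cup\{c,d\},\quad
  x^{3}=S\cup\{a,c\},\quad x^{4}=S\cup\{b,d\},
\]
and set $h=\mathbf 1_{x^1}+\mathbf 1_{x^2}-\mathbf 1_{x^3}-\mathbf 1_{x^4}$. The total mass is $1+1-1-1=0$. Each label in $S$ appears in all four draws, so it contributes $0$. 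Label $a$ appears in $x^1$ and $x^3$, label $b$ in $x^1$ and $x^4$, label $c$ in $x^2$ and $x^3$, and label $d$ in $x^2$ and $x^4$. In each case there is one positive and one negative occurrence, so the contribution is $0$. All other labels never appear. The four draws are distinct, so $h\ne 0$.

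Taking $\delta=1/N$ makes $P_\delta$ assign probability $2/N$ to $x^1,x^2$, probability $0$ to $x^3,x^4$, and $1/N$ elsewhere. This is a valid non-uniform law with uniform marginals. We can also note that its pair marginal changes: $\Pr(\{a,b\}\subseteq x)$ increases. This is exactly the joint structure invisible to marginal tests.

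The only real obstacle is the combinatorial bookkeeping: checking that the range $2\le k\le m-2$ leaves room for $S$ and four extra labels, and that the four $k$-sets are distinct. Both follow from $k-2\ge 0$ and $m-(k-2)\ge 4$. One can also explain why the hypothesis is sharp. When $k\in\{0,1,m-1,m\}$, the map from a distribution to its marginal vector is injective, because each draw is determined by the label it includes or omits. No such $h$ exists in those cases.
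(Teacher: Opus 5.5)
Your proof is correct and follows essentially the paper's route: both perturb the uniform law by a small multiple of the alternating four-label swap pattern $\{a,b\},\{c,d\}$ (positive) versus $\{a,c\},\{b,d\}$ (negative), and both check that every single-label sum vanishes. The paper instead uses the pair-indicator function $h(T)=\mathbf{1}\{a,b\in T\}+\mathbf{1}\{c,d\in T\}-\mathbf{1}\{a,c\in T\}-\mathbf{1}\{b,d\in T\}$, which is exactly the sum of your $h$ over all admissible cores $S$ and so needs a binomial-count cancellation, whereas your single-core version reduces the marginal check to inspecting four explicit draws.
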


\begin{proof}
Choose four distinct labels $a,b,c,d$.  For $S\in\Omega_{m,k}$ define
\[
 h(S)=\mathbf{1}\{a,b\subset S\}+\mathbf{1}\{c,d\subset S\}
      -\mathbf{1}\{a,c\subset S\}-\mathbf{1}\{b,d\subset S\}.
\]
The sum of $h(S)$ over all $S$ is zero.  Conditioning on any fixed label also
gives zero: for labels in $\{a,b,c,d\}$ the positive and negative terms cancel,
and for any other label all four terms have the same count
$\binom{m-3}{k-3}$, with the convention that this count is zero when $k=2$.
Hence, for sufficiently small $|\theta|$,
\[
  P_\theta(S)=\binom{m}{k}^{-1}\{1+\theta h(S)\}
\]
is a valid non-uniform distribution whose single-coordinate marginals match the
uniform law.  Pair probabilities are changed whenever $\theta\ne0$.
\end{proof}

\begin{theorem}[Structured witness audit]
Let $\mathcal{F}=\{f_1,\ldots,f_L\}$ with
$f_j:\Omega_{m,k}\to[0,1]$, let $\mu_j=\mathbb{E}_U f_j(X)$, and define
\[
  D_{\mathcal F}
  =\max_{1\le j\le L}
   \left|\frac{1}{n}\sum_{i=1}^n f_j(X_i)-\mu_j\right|.
\]
The threshold
\[
  \tau_\alpha=\sqrt{\frac{\log(2L/\alpha)}{2n}}
\]
defines a level-$\alpha$ auditor.  If
$\Delta_{\mathcal F}(P)=\max_j|\mathbb{E}_P f_j-\mu_j|$ satisfies
\[
  \Delta_{\mathcal F}(P)\ge
  \tau_\alpha+\sqrt{\frac{\log(2/\beta)}{2n}},
\]
then the rejection probability under $P$ is at least $1-\beta$.
\end{theorem}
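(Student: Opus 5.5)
The argument is two applications of Hoeffding's inequality: a union bound over the $L$ witnesses for the level, and a single witness for the power. Write $\hat\mu_j=\frac1n\sum_{i=1}^n f_j(X_i)$. The auditor rejects when $D_{\mathcal F}>\tau_\alpha$.

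\textbf{Level.} Under $U^n$ the variables $f_j(X_1),\ldots,f_j(X_n)$ are i.i.d., take values in $[0,1]$, and have mean $\mu_j$. The two-sided Hoeffding bound therefore gives
\[
\Pr_{U^n}\bigl(|\hat\mu_j-\mu_j|>t\bigr)\le 2e^{-2nt^2}
\]
for each fixed $j$. A union bound over $j=1,\ldots,L$ gives $\Pr_{U^n}(D_{\mathcal F}>t)\le 2Le^{-2nt^2}$. Setting $t=\tau_\alpha$ makes the right side exactly $\alpha$, so the auditor has level $\alpha$. No independence across different $j$ is needed, because the union bound handles arbitrary dependence among the witnesses.

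\textbf{Power.} Let $j^\star$ attain the maximum in $\Delta_{\mathcal F}(P)$, and write $\nu=\mathbb E_P f_{j^\star}$, so that $|\nu-\mu_{j^\star}|=\Delta_{\mathcal F}(P)$. Set $s=\sqrt{\log(2/\beta)/(2n)}$. Since $D_{\mathcal F}\ge|\hat\mu_{j^\star}-\mu_{j^\star}|$, the reverse triangle inequality gives
\[
D_{\mathcal F}\ge \Delta_{\mathcal F}(P)-|\hat\mu_{j^\star}-\nu|.
\]
By hypothesis $\Delta_{\mathcal F}(P)\ge\tau_\alpha+s$, so the auditor fails to reject only if $|\hat\mu_{j^\star}-\nu|\ge s$. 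Applying Hoeffding under $P^n$ to the single witness $f_{j^\star}$ bounds this event's probability by $2e^{-2ns^2}=\beta$. Hence the rejection probability under $P$ is at least $1-\beta$.

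Neither step is a real obstacle; the only point needing care is the boundary of the rejection rule. I use a strict rejection $D_{\mathcal F}>\tau_\alpha$ together with the non-strict deviation events above. If one prefers Hoeffding with non-strict inequalities throughout, the bounds are unchanged, because Hoeffding bounds $\Pr(\cdot\ge t)$ by the same quantity. A one-sided Hoeffding bound would improve the power constant from $\log(2/\beta)$ to $\log(1/\beta)$, but the two-sided form is enough for the stated threshold.
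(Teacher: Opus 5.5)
Your proof is correct and follows the same route as the paper: Hoeffding with a union bound over the $L$ witnesses for the level, then a single Hoeffding bound on the maximizing witness $f_{j^\star}$ for the power. Your explicit reverse-triangle step and your handling of strict versus non-strict rejection fill in details the paper's proof leaves implicit.
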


\begin{proof}
Under $U^n$, Hoeffding's inequality and a union bound over the $L$ witnesses give
$\Pr_{U^n}(D_{\mathcal F}>\tau_\alpha)\le\alpha$.  Let $f_{j^\star}$ attain
the witness gap under $P$.  A second Hoeffding bound places its empirical mean
within $\sqrt{\log(2/\beta)/(2n)}$ of its $P$-expectation with probability at
least $1-\beta$; the displayed gap then forces rejection.
\end{proof}

Anchored boxes use witnesses $f_t(y)=\mathbf{1}\{y\le t\}$.  Pair indicators
and bounded serial-overlap functions give analogous auditors.  Their sample
requirements depend on the witness gap and $\log L$, rather than directly on
the full support size $\binom{m}{k}$.

\begin{lemma}[Exact Monte Carlo calibration]
Let $T$ be any statistic whose large values are more extreme, and let
$T_0^{(1)},\ldots,T_0^{(B)}$ be i.i.d. exact-null replications generated at the
same $(m,k,n)$ as the observed sample.  The randomized-rank p-value
\[
  \hat p=\frac{1+\sum_{b=1}^{B}\mathbf{1}\{T_0^{(b)}\ge T_{\mathrm{obs}}\}}
              {B+1}
\]
satisfies $\Pr_{H_0}(\hat p\le \alpha)\le \alpha$ for every $\alpha\in[0,1]$.
\end{lemma}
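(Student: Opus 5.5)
The plan is the standard exchangeability argument for Monte Carlo p-values. Under $H_0$ the observed sample is itself an exact-null draw at the same $(m,k,n)$, so $T_{\mathrm{obs}},T_0^{(1)},\ldots,T_0^{(B)}$ are $B+1$ i.i.d. random variables. Write $T_0:=T_{\mathrm{obs}}$ and $R=\sum_{b=0}^{B}\mathbf{1}\{T_0^{(b)}\ge T_{\mathrm{obs}}\}$. Then $\hat p=R/(B+1)$, and $R$ counts the values (including itself) that are at least as large as $T_{\mathrm{obs}}$. The claim $\Pr(\hat p\le\alpha)\le\alpha$ is then the statement that $T_{\mathrm{obs}}$ is not too often among the largest values of an exchangeable vector.

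The key steps, in order, are as follows.
\begin{enumerate}[leftmargin=1.4em]
\item For each index $j\in\{0,\ldots,B\}$, set $R_j=\sum_{b=0}^{B}\mathbf{1}\{T_0^{(b)}\ge T_0^{(j)}\}$. By exchangeability, $R_0$ has the same law as every $R_j$. Hence, with $r=\lfloor\alpha(B+1)\rfloor$,
\[
\Pr(R_0\le r)=\frac{1}{B+1}\,\mathbb{E}\,\#\{j:R_j\le r\}.
\]
\item Prove the deterministic counting bound $\#\{j:R_j\le r\}\le r$ for every realization, ties included. Suppose $j$ attains the smallest value among the indices in this set. Every index in the set has value at least $T_0^{(j)}$, so all of them are counted in $R_j$. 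This gives $\#\{\cdot\}\le R_j\le r$.
\item Combine the two steps to get $\Pr(\hat p\le\alpha)=\Pr(R_0\le r)\le r/(B+1)\le\alpha$, for every $\alpha\in[0,1]$. The endpoints are trivial: when $\alpha<1/(B+1)$ we have $r=0$ and the probability is $0$.
\end{enumerate}

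The main obstacle is handling ties, since $T$ may be discrete (pair maxima, chi-square on counts). Ties rule out the naive argument that the rank of $T_{\mathrm{obs}}$ is uniform on $\{1,\ldots,B+1\}$. The counting bound in step 2 avoids this, because with ties the ``$\ge$'' convention only makes $R_j$ larger, i.e.\ the test more conservative. The label ``randomized-rank'' does not require a separate tie-breaking randomization. Validity also needs the premise that the null replications are generated exactly from $U$ at the same $(m,k,n)$ and independently of the data, which the statement assumes. The result is only a level guarantee, with no claim of exactness.
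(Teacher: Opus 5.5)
Your proof is correct. It rests on the same foundation as the paper's proof, namely exchangeability of $T_{\mathrm{obs}},T_0^{(1)},\ldots,T_0^{(B)}$ under $H_0$, but you handle the key step differently.

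The paper asserts that the rank of $T_{\mathrm{obs}}$ is uniform ``up to ties''. It then calls the $\ge$/$+1$ convention the standard conservative tie handling, without showing that this convention preserves the level. Made precise, that route needs a coupling with uniform random tie-breaking: the tie-broken rank $\tilde R$ is exactly uniform on $\{1,\ldots,B+1\}$, and $R\ge\tilde R$ pointwise.

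You never claim that any rank is uniform. You use the symmetrization identity $\Pr(R_0\le r)=\frac{1}{B+1}\mathbb{E}\,\#\{j:R_j\le r\}$ together with the deterministic bound $\#\{j:R_j\le r\}\le r$, obtained from the minimal-value index of that set. This gives the level guarantee realization by realization, with ties included and no auxiliary randomization.

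Your route gives a self-contained proof of exactly the step the paper leaves informal. It also recovers exactness when ties have probability zero: then the $R_j$ are a permutation of $\{1,\ldots,B+1\}$, so the count equals $r$. The paper's sketch is shorter but leaves the reader to supply the tie-breaking coupling.
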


\begin{proof}
Under $H_0$, $T_{\mathrm{obs}},T_0^{(1)},\ldots,T_0^{(B)}$ are exchangeable.
The rank of $T_{\mathrm{obs}}$ among the $B+1$ values is therefore uniform up to
ties; using the upper-tail count with the additive one is the standard
conservative tie handling.
\end{proof}

\begin{lemma}[Null moments for pair and serial statistics]
For a fixed unordered pair $\{i,j\}$, the per-draw inclusion probability is
\[
  p_2=\Pr(i,j\in X)=\frac{k(k-1)}{m(m-1)}.
\]
Thus its count over $n$ independent draws is $\mathrm{Binomial}(n,p_2)$
marginally.  For two consecutive independent draws,
$|X_t\cap X_{t-1}|$ is hypergeometric with
\[
  \mathbb{E}|X_t\cap X_{t-1}|=\frac{k^2}{m}, \qquad
  \mathrm{Var}(|X_t\cap X_{t-1}|)
  =k\frac{k}{m}\left(1-\frac{k}{m}\right)\frac{m-k}{m-1}.
\]
\end{lemma}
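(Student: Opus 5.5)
The plan is to treat the two claims separately, working under the exact null in which $X_1,\ldots,X_n$ are i.i.d.\ uniform on $\Omega_{m,k}$.

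\emph{Pair inclusion.} I count the draws that contain both $i$ and $j$. Such a draw is fixed by its remaining $k-2$ labels, chosen from the other $m-2$, so there are $\binom{m-2}{k-2}$ of them. The ratio $\binom{m-2}{k-2}/\binom{m}{k}$ simplifies to $k(k-1)/(m(m-1))$. The pair count over the transcript is $\sum_{t=1}^n \mathbf{1}\{i,j\in X_t\}$. This is a sum of $n$ independent Bernoulli$(p_2)$ indicators, because distinct draws are independent, and so it is Binomial$(n,p_2)$. I will say explicitly that ``marginally'' means for one fixed pair. Counts for different pairs are dependent within a draw, and their joint law is not a product of binomials.

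\emph{Serial overlap.} I condition on $X_{t-1}=S$, where $S$ is a fixed $k$-set. Since $X_t$ is independent of $X_{t-1}$ and uniform, $|X_t\cap S|$ counts how many of the $k$ labels drawn without replacement from $m$ land in the $k$ ``marked'' labels of $S$. Its law is therefore Hypergeometric$(m,k,k)$:
\[
\Pr(|X_t\cap S|=r)=\binom{k}{r}\binom{m-k}{k-r}\Big/\binom{m}{k}.
\]
This conditional law does not depend on $S$, so it is also the unconditional law of $|X_t\cap X_{t-1}|$.

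The moments follow from the standard hypergeometric formulas with population $m$, $K=k$ successes and $k$ draws. The mean is $kK/m=k^2/m$, and the variance is $k\frac{K}{m}(1-\frac{K}{m})\frac{m-k}{m-1}$, which is exactly the displayed expression. If I want a self-contained check, I would write $|X_t\cap S|=\sum_{\ell\in S}\mathbf{1}\{\ell\in X_t\}$ and use $\Pr(\ell\in X_t)=k/m$ together with the pair probability $p_2$ from the first part. This gives a mean of $k\cdot k/m$. For the variance, $k\frac{k}{m}+k(k-1)p_2-k^4/m^2$ simplifies to the stated formula, so the first part is reused directly.

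\emph{Main obstacle.} The only real subtlety is this simplification step in the variance, and it is routine algebra. I also note that successive overlaps $|X_t\cap X_{t-1}|$ for different $t$ share draws and are dependent. The lemma claims only the one-step law, so no further argument is needed.
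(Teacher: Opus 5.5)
Your proposal is correct and follows essentially the same route as the paper. You obtain the pair probability by counting (equivalent to the paper's sampling of two named labels without replacement), and you get the hypergeometric overlap law by conditioning on $X_{t-1}$, adding the correct observation that the conditional law does not depend on it; your indicator-sum check $k\frac{k}{m}+k(k-1)p_2-\frac{k^4}{m^2}=\frac{k^2(m-k)^2}{m^2(m-1)}$ is a valid self-contained replacement for the paper's citation of the standard hypergeometric moments.
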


\begin{proof}
The pair probability follows by sampling two named labels without replacement.
For the overlap, condition on $X_{t-1}$.  The next draw samples $k$ labels from a
population of size $m$ containing $k$ marked labels, giving the displayed
hypergeometric moments \citep{serfling1974,chvatal1979}.
\end{proof}

\section{Methods}

\paragraph{Exact null.}
Every null sample is drawn without replacement from $\{1,\ldots,m\}$ and sorted.
Let $Y=(Y_{(1)},\ldots,Y_{(k)})$ be the sorted zero-indexed labels.  For an
anchored threshold vector $t=(t_1,\ldots,t_k)$, the exact null probability is
\[
  q_t=\binom{m}{k}^{-1}
      \#\{0\le y_1<\cdots<y_k<m:\; y_j\le t_j \;\forall j\}.
\]
The CUDA implementation evaluates this count by dynamic programming:
\[
  C(r,a)=\sum_{v=a}^{\min(t_r,m-k+r)} C(r+1,v+1), \qquad C(k,a)=1,
\]
with $q_t=C(0,0)/\binom{m}{k}$.  This avoids a continuous-uniform
approximation for the sorted simplex.

\paragraph{Statistics.}
We report five statistic families:
\begin{itemize}[leftmargin=1.4em]
  \item marginal chi-square,
  $\sum_{i=1}^m(C_i-nk/m)^2/(nk/m)$;
  \item pair max-z,
  $\max_{i<j}|(C_{ij}-np_2)/\sqrt{np_2(1-p_2)}|$;
  \item serial overlap-z, based on the hypergeometric moments above;
  \item anchored-box discrepancy,
  $\max_{t\in\mathcal{T}}|n^{-1}\sum_{\ell=1}^n\mathbf{1}\{Y_\ell\le t\}-q_t|$;
  \item H0/MST geometry, obtained by projecting incidence vectors with a fixed
  Gaussian map and summarizing the Euclidean minimum-spanning-tree edge spectrum.
\end{itemize}
The maximum statistics are calibrated by exact-null Monte Carlo, so dependence
among pairs or boxes is kept in the reference distribution.

\paragraph{Alternative families.}
We use five controlled alternatives:
\begin{itemize}[leftmargin=1.4em]
  \item hot-marginal: a low-index subset receives multiplicative sampling weight;
  \item pair-injection: a fixed pair is injected with probability $\rho$;
  \item block-cluster: with probability $\rho$, choose one of $m/k$ blocks and
  return that whole block; single-number marginals remain uniform;
  \item band-repulsion: with probability $\rho$, choose one number from each of
  $k$ equal bands; single-number marginals remain uniform;
  \item serial-stickiness: with probability $\rho$, copy one label from the
  previous draw and fill the rest uniformly.
\end{itemize}

\paragraph{Calibration.}
All p-values and power values are Monte Carlo calibrated at the same sample size
as the tested source.  The local source audit used 300 null replications.  The
GPU experiments used 50,000--100,000 null replications per $n$ and
10,000--20,000 alternative replications per condition in the first pass, with
larger calibrated replications for the final GPU tables.  Discrete statistics
use conservative upper-tail rejection against the calibrated critical value to
avoid size inflation from ties.

\section{Results}

\subsection{Observed and Reference Sources}

The local source audit uses EuroMillions main-number draws from 2004-02-13
through 2026-06-19, plus PCG64, MT19937, a weak LCG, controlled alternatives, and
a small CURBy cache.  Table~\ref{tab:source-pvalues} reports calibrated p-values.
For the observed EuroMillions source, the minimum p-value is 0.106.  Among the
observed/reference sources, the minimum Benjamini--Hochberg q-value is 0.279.
The EuroMillions trace is inside the calibrated reference range of this battery.

\begin{table}[!htbp]
\centering
\small
\resizebox{\linewidth}{!}{\begin{tabular}{lrrrrrrr}
\toprule
statistic & marginal\_chi2 & pair\_max\_z & serial\_overlap\_z & anchored\_box & projection\_energy & mst\_cv & h0\_total\_persistence \\
source &  &  &  &  &  &  &  \\
\midrule
curby\_qrng\_beacon & 0.030 & 0.475 & 0.339 & 0.020 & 0.342 & 0.641 & 0.801 \\
euromillions\_main & 0.163 & 0.944 & 0.834 & 0.515 & 0.106 & 0.282 & 0.522 \\
lcg\_minstd & 0.189 & 0.751 & 0.774 & 0.040 & 0.578 & 0.156 & 0.807 \\
mt19937\_reference & 0.013 & 0.120 & 0.854 & 0.944 & 0.661 & 0.316 & 0.425 \\
pair\_bias\_0.12 & 0.003 & 0.003 & 0.439 & 0.003 & 0.811 & 0.080 & 0.093 \\
pcg64\_reference & 0.721 & 0.751 & 0.970 & 0.764 & 0.080 & 0.794 & 0.123 \\
sticky\_0.16 & 0.468 & 0.751 & 0.003 & 0.369 & 0.306 & 0.914 & 0.269 \\
weighted\_hot\_0.45 & 0.003 & 0.003 & 0.003 & 0.003 & 0.924 & 0.312 & 0.797 \\
\bottomrule
\end{tabular}
}
\caption{Calibrated p-values from the local source audit.  The CURBy row is
limited by the 16,000-byte cache available during the run ($n=280$ converted
5/50 draws).  EuroMillions has $n=1,956$ main-number draws.}
\label{tab:source-pvalues}
\end{table}
\FloatBarrier

\subsection{Marginal-Preserving Alternatives}

At $n=1,956$, the two marginal-preserving alternatives keep marginal chi-square
near nominal size while joint/geometric tests gain power
(Figure~\ref{fig:marginal-blind}).  Under block clustering at strength 0.04,
pair max-z reaches 0.638 power while marginal chi-square is 0.051.  Under band
repulsion at strength 0.08, anchored boxes reach 0.741 power while marginal
chi-square is 0.051.

\begin{figure}[H]
\centering
\includegraphics[width=\linewidth]{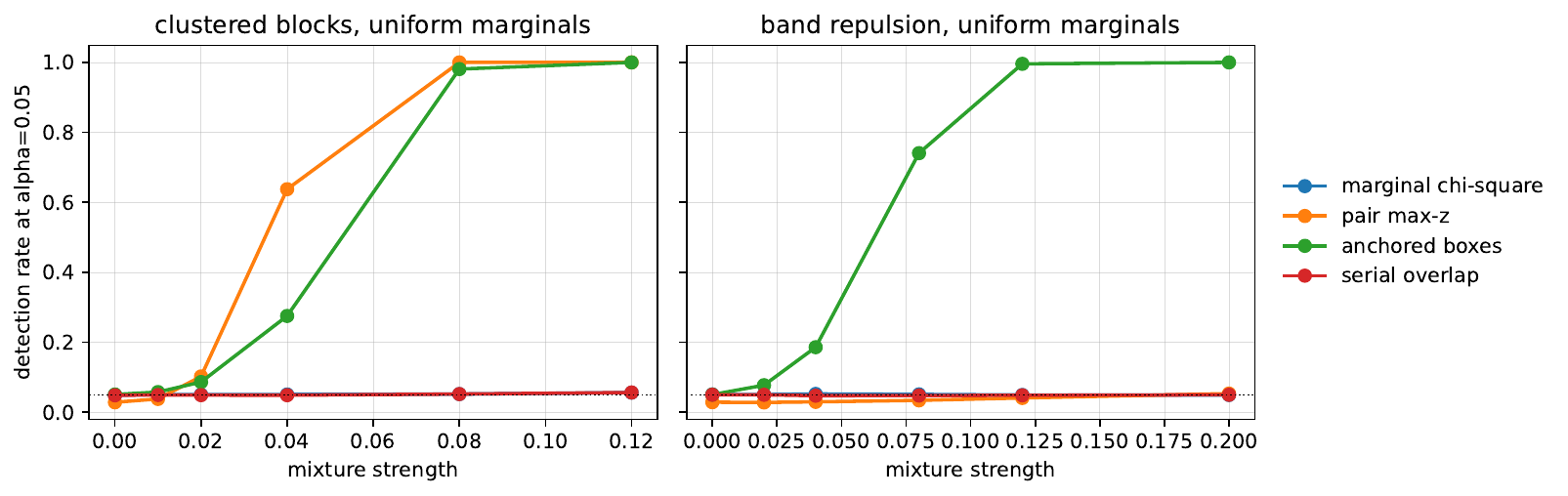}
\caption{Marginal-preserving alternatives at $n=1,956$ for 5/50 draws.  The
single-number marginal test remains at nominal size, while pair and geometric
statistics detect joint structure.}
\label{fig:marginal-blind}
\end{figure}
\FloatBarrier

\subsection{Sample-Size Frontier}

Figure~\ref{fig:sample-frontier} fixes the alternative strength at 0.04 and
varies the number of draws.  The transition is gradual: for block clustering,
pair max-z power is 0.638 at $n=1,956$ and 0.999 at $n=5,000$, while marginal
chi-square stays near 0.05.  For band repulsion, anchored-box power is 0.186 at
$n=1,956$, 0.489 at $n=5,000$, 0.882 at $n=10,000$, 0.971 at $n=12,000$, and
1.000 at $n=20,000$.  The same structured deviation can therefore have low
audit advantage at one transcript length and near-unit detection probability at
a larger one.

\begin{figure}[!htbp]
\centering
\includegraphics[width=\linewidth]{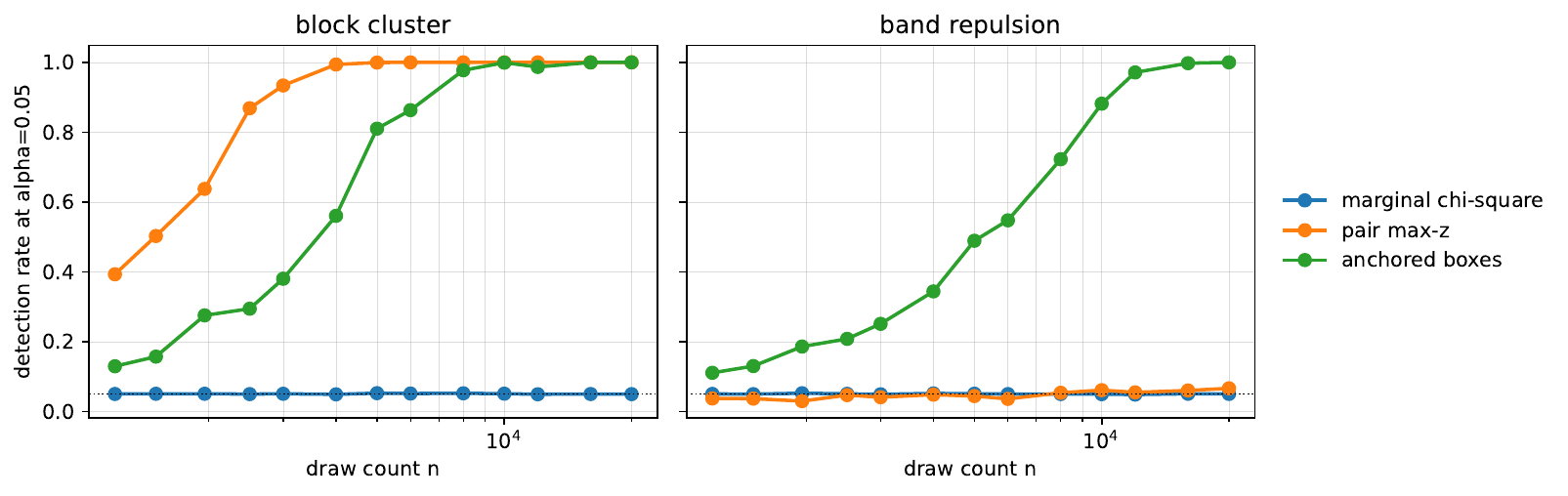}
\caption{Sample-size frontier for marginal-preserving alternatives at strength
0.04 in 5/50 draws.  The x-axis is logarithmic.}
\label{fig:sample-frontier}
\end{figure}
\FloatBarrier

\subsection{Format Scale}

The fixed 6/54 run and the A100 6/90 run repeat the GPU audit for larger
supports, including $C(90,6)=622{,}614{,}630$.  At $n=5,000$, 6/90
block-cluster strength 0.04 is detected by pair max-z with power 1.000 and by
anchored boxes with power 0.688, while marginal chi-square is 0.049.  For 6/90
band-repulsion strength 0.04, anchored boxes reach 0.651 power, while marginal
chi-square is 0.048 and pair max-z is 0.051.  The fixed 6/54 run gives the same
pattern at $n=5,000$: block clustering has pair max-z power 1.000 and anchored
box power 0.606, while band repulsion has anchored-box power 0.712 and marginal
chi-square remains 0.053.  Figure~\ref{fig:format-scale} compares 5/50, 6/54,
and 6/90 surfaces.

\begin{figure}[!htbp]
\centering
\includegraphics[width=\linewidth]{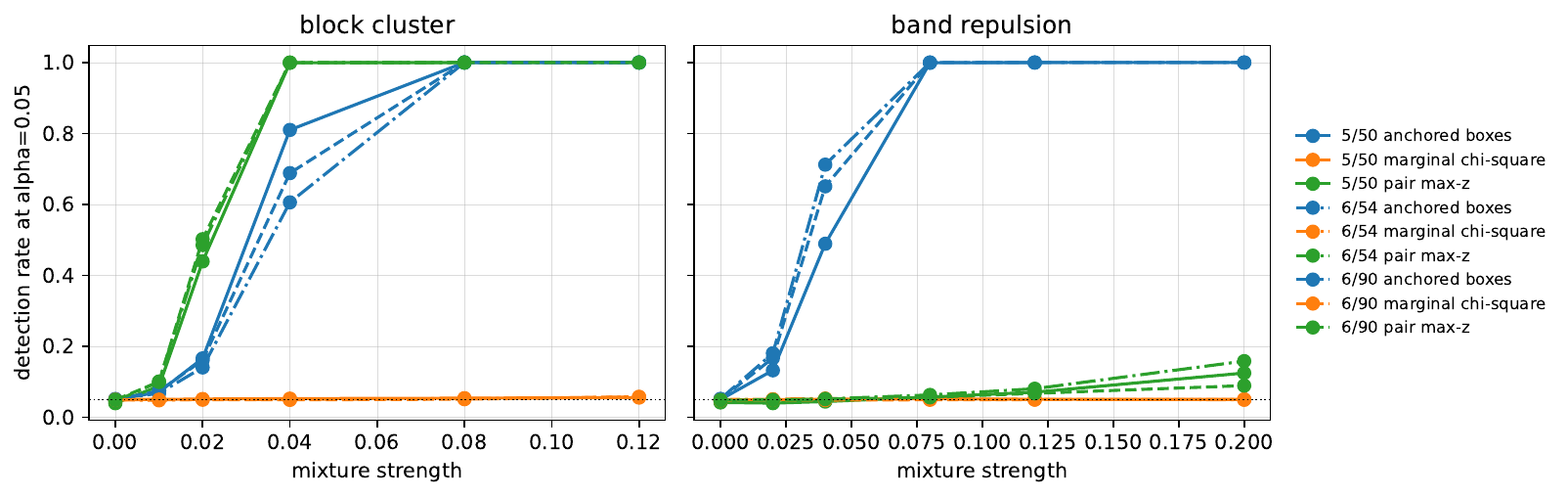}
\caption{Format-scale comparison at $n=5,000$.  Solid lines are 5/50,
dash-dotted lines are 6/54, and dashed lines are 6/90.  Marginal-preserving
alternatives keep marginal chi-square near nominal size across formats.}
\label{fig:format-scale}
\end{figure}
\FloatBarrier

\section{Discussion}

The experiments separate three regimes.  First, arbitrary joint uniformity is
outside the information budget of real lottery histories.  Second, marginal-only
audits can be exactly blind to meaningful joint alternatives.  Third, geometric
and pair statistics have useful power against stated alternatives, especially as
the audit sample grows.

Each reported value is attached to a source, null model, alternative class,
statistic, calibration budget, and power curve.  These elements define the scope
of the finite-transcript security claim and can be reproduced from the archived
code and CSV files.

\paragraph{Limitations.}
The CPU H0/MST statistic is deliberately modest; full H1/H2 persistent homology
would require an additional Ripser++-style pipeline.  The CURBy cache used in the
source table is small because the public endpoint was slow during the run; it is
included as a low-volume reference-source smoke test.  An early 6/54 exploratory
run exposed an anchored-box implementation issue and was discarded.  The
archived 6/54 CSV, main tables, and figures use the corrected implementation.

\section{Reproducibility}

All code, cached public data, generated figures, final GPU CSV files, and result
tables are included in the ancillary archive.  A SHA-256 artifact manifest
records the packaged files.  The main commands are:
\begin{verbatim}
python3 scripts/fetch_data.py
python3 scripts/run_experiments.py
nvcc -O3 -std=c++17 scripts/gpu_combinatorial_audit.cu -lcurand \
  -o build/gpu_combinatorial_audit
python3 scripts/analyze_gpu_results.py
\end{verbatim}
The GPU CSV files record device label, $n$, alternative, strength, statistic,
replication count, critical value, null mean/standard deviation, wall-clock
seconds, box count, and seed.

\section{Conclusion}

Full joint uniformity over $\Omega_{m,k}$ has the
$\sqrt{\binom{m}{k}}$ sparse-testing barrier.  Structured witness families avoid
direct dependence on the full support and require
$O((\log L+\log(1/\beta))/\Delta^2)$ samples for witness gap $\Delta$.
Exact-null calibrated marginal, pair, serial, and geometric auditors instantiate
this security tradeoff for finite combinatorial transcripts.

\bibliographystyle{plainnat}
\bibliography{refs}

@article{paninski2008,
  author = {Paninski, Liam},
  title = {A Coincidence-Based Test for Uniformity Given Very Sparsely Sampled Discrete Data},
  journal = {IEEE Transactions on Information Theory},
  volume = {54},
  number = {10},
  pages = {4750--4755},
  year = {2008},
  doi = {10.1109/TIT.2008.928987}
}

@article{canonne2020,
  author = {Canonne, Cl{\'e}ment L.},
  title = {A Survey on Distribution Testing: Your Data Is Big. But Is It Blue?},
  journal = {Theory of Computing},
  note = {Graduate Surveys 9},
  pages = {1--100},
  year = {2020},
  doi = {10.4086/toc.gs.2020.009},
  url = {https://theoryofcomputing.org/articles/gs009/}
}

@article{lecuyer2007testu01,
  author = {L'Ecuyer, Pierre and Simard, Richard},
  title = {{TestU01}: A {C} Library for Empirical Testing of Random Number Generators},
  journal = {ACM Transactions on Mathematical Software},
  volume = {33},
  number = {4},
  pages = {22},
  year = {2007},
  doi = {10.1145/1268776.1268777}
}

@techreport{nist80022,
  author = {Bassham, Lawrence E. and Rukhin, Andrew L. and Soto, Juan and Nechvatal, James R. and Smid, Miles E. and Barker, Elaine B. and Leigh, Stefan D. and Levenson, Mark and Vangel, Mark and Banks, David L. and Heckert, Nathanael Alan and Dray, James F. and Vo, San},
  title = {A Statistical Test Suite for Random and Pseudorandom Number Generators for Cryptographic Applications},
  institution = {National Institute of Standards and Technology},
  type = {Special Publication},
  number = {800-22 Revision 1a},
  year = {2010},
  doi = {10.6028/NIST.SP.800-22r1a}
}

@techreport{nist80090b,
  author = {Turan, Meltem S{\"o}nmez and Barker, Elaine and Kelsey, John and McKay, Kerry A. and Baish, Mary L. and Boyle, Mike},
  title = {Recommendation for the Entropy Sources Used for Random Bit Generation},
  institution = {National Institute of Standards and Technology},
  type = {Special Publication},
  number = {800-90B},
  year = {2018},
  doi = {10.6028/NIST.SP.800-90B}
}

@book{niederreiter1992,
  author = {Niederreiter, Harald},
  title = {Random Number Generation and Quasi-Monte Carlo Methods},
  publisher = {SIAM},
  address = {Philadelphia},
  year = {1992},
  doi = {10.1137/1.9781611970081}
}

@article{diaconis1989coincidences,
  author = {Diaconis, Persi and Mosteller, Frederick},
  title = {Methods for Studying Coincidences},
  journal = {Journal of the American Statistical Association},
  volume = {84},
  number = {408},
  pages = {853--861},
  year = {1989},
  doi = {10.1080/01621459.1989.10478847}
}

@article{serfling1974,
  author = {Serfling, Robert J.},
  title = {Probability Inequalities for the Sum in Sampling without Replacement},
  journal = {The Annals of Statistics},
  volume = {2},
  number = {1},
  pages = {39--48},
  year = {1974},
  doi = {10.1214/aos/1176342611}
}

@article{chvatal1979,
  author = {Chvatal, Vasek},
  title = {The Tail of the Hypergeometric Distribution},
  journal = {Discrete Mathematics},
  volume = {25},
  number = {3},
  pages = {285--287},
  year = {1979},
  doi = {10.1016/0012-365X(79)90084-0}
}

@article{bubenik2015,
  author = {Bubenik, Peter},
  title = {Statistical Topological Data Analysis Using Persistence Landscapes},
  journal = {Journal of Machine Learning Research},
  volume = {16},
  number = {3},
  pages = {77--102},
  year = {2015},
  url = {https://jmlr.org/papers/v16/bubenik15a.html}
}

@book{edelsbrunner2010,
  author = {Edelsbrunner, Herbert and Harer, John},
  title = {Computational Topology: An Introduction},
  publisher = {American Mathematical Society},
  address = {Providence, RI},
  year = {2010},
  isbn = {9780821849255}
}

@article{ghrist2008,
  author = {Ghrist, Robert},
  title = {Barcodes: The Persistent Topology of Data},
  journal = {Bulletin of the American Mathematical Society},
  volume = {45},
  number = {1},
  pages = {61--75},
  year = {2008},
  doi = {10.1090/S0273-0979-07-01191-3}
}

@article{bobrowski2018,
  author = {Bobrowski, Omer and Kahle, Matthew},
  title = {Topology of Random Geometric Complexes: A Survey},
  journal = {Journal of Applied and Computational Topology},
  volume = {1},
  number = {3--4},
  pages = {331--364},
  year = {2018},
  doi = {10.1007/s41468-017-0010-0}
}

@article{gnewuch2009,
  author = {Gnewuch, Michael and Srivastav, Anand and Winzen, Carola},
  title = {Finding Optimal Volume Subintervals with $k$ Points and Calculating the Star Discrepancy Are {NP}-Hard Problems},
  journal = {Journal of Complexity},
  volume = {25},
  number = {2},
  pages = {115--127},
  year = {2009},
  doi = {10.1016/j.jco.2008.07.001}
}

@inproceedings{zhang2020ripserpp,
  author = {Zhang, Simon and Xiao, Mengbai and Wang, Hao},
  title = {{GPU}-Accelerated Computation of {Vietoris--Rips} Persistence Barcodes},
  booktitle = {36th International Symposium on Computational Geometry (SoCG 2020)},
  series = {Leibniz International Proceedings in Informatics (LIPIcs)},
  volume = {164},
  pages = {70:1--70:17},
  publisher = {Schloss Dagstuhl -- Leibniz-Zentrum f{\"u}r Informatik},
  year = {2020},
  doi = {10.4230/LIPIcs.SoCG.2020.70}
}

@article{kavuri2025curby,
  author = {Kavuri, Gautam A. and Palfree, Jasper and Reddy, Dileep V. and Zhang, Yanbao and Bienfang, Joshua C. and Mazurek, Michael D. and others},
  title = {Traceable Random Numbers from a Non-Local Quantum Advantage},
  journal = {Nature},
  year = {2025},
  doi = {10.1038/s41586-025-09054-3}
}

@techreport{kelsey2019nistir8213,
  author = {Kelsey, John and Brand{\~a}o, Lu{\'i}s T. A. N. and Peralta, Rene and Booth, Harold},
  title = {A Reference for Randomness Beacons: Format and Protocol Version 2},
  institution = {National Institute of Standards and Technology},
  type = {Draft NISTIR},
  number = {8213},
  year = {2019},
  doi = {10.6028/NIST.IR.8213-draft}
}

@misc{euromillionsapi,
  author = {Mealha, Pedro},
  title = {Euromillions Public {API}},
  year = {2026},
  howpublished = {\url{https://github.com/pedro-mealha/euromillions-api}},
  note = {Accessed 2026-06-20}
}

@misc{curbyweb,
  author = {{University of Colorado Boulder}},
  title = {{CURBy}: {CU} Randomness Beacon},
  year = {2026},
  howpublished = {\url{https://random.colorado.edu/}},
  note = {Accessed 2026-06-20}
}

\end{document}